\newtheorem{theorem}{Theorem}[section]
\newtheorem{lemma}[theorem]{Lemma}
\newtheorem{proposition}[theorem]{Proposition}
\newtheorem{observation}[theorem]{Observation}
\theoremstyle{definition}
\newtheorem{definition}[theorem]{Definition}
\newtheorem{remark}[theorem]{Remark}
\newtheorem{notation}[theorem]{Notation}
\newcommand{\C}{\mathcal{C}}
\newcommand{\LL}{\mathcal{L}}
\newcommand{\II}{\mathcal{I}}
\newcommand{\IM}{\bar{M}}
\newcommand{\R}{\mathbb{R}}
\newcommand{\bq}{\bm{q}}
\newcommand{\bomega}{\bm{\omega}}
\newcommand{\bxi}{\bm{\xi}}
\title{Hypo-paradoxical Linkages: Linkages That Should Move—But Don’t}
\author{
  Nir Shvalb\thanks{Corresponding author: \texttt{nirsh@ariel.ac.il}. ORCID: 0000-0001-8246-3727} \\
  \small Department of Mechanical Engineering, Ariel University, Ariel 40700, Israel
  \and
  Oded Medina\thanks{ORCID: 0000-0001-6590-0059} \\
  \small Department of Mechanical Engineering, Ariel University, Ariel 40700, Israel
}
\date{} 
\begin{document}
\maketitle
\begin{abstract}
While paradoxical linkages famously violate the Chebyshev–Gr\"ubler–Kutzbach criterion by exhibiting unexpected mobility, we identify an opposing phenomenon: a class of linkages that appear mobile according to the same criterion, yet are in fact rigid.  We refer to these as \emph{hypo-paradoxical linkages}, and proceed to analyze and illustrate their behavior. We use the same tools to further explain the unexpected positive mobility of Bennett mechanism.
\end{abstract}




\noindent\textbf{Keywords:} Gr\"ubler criteria; Mobility; Paradoxical linkages; Bennett mechanism; Hypo-paradoxical linkages

\section{Introduction}
%
Classical mobility criteria, such as the Chebyshev–Gr\"ubler–Kutzbach formula (1904), offer a quick estimate for the number of degrees-of-freedom in linkage systems based on link and joint counts. However, several historical linkages defy these predictions, exhibiting unexpected motion despite appearing overconstrained. Such linkages are known as \textit{paradoxical linkages}. In addition to Gr\"ubler's formula, rigidity criteria such as \emph{Laman’s theorem} provide a somewhat more complex condition for the rigidity of planar linkages. The theorem states that a graph with \( n \) vertices is \emph{minimally rigid} in the plane if it has exactly \( 2n - 3 \) edges, and no subset of \( k \) vertices spans more than \( 2k - 3 \) edges.  However, Laman’s criterion also breaks down in the presence of non generic linkages necessitating deeper analysis of configuration spaces and infinitesimal rigidity (cf. \cite{connelly2005generic,kapovich2002moduli}).

For example, the Sarrus linkage \cite{sarrus1853} is an early paradoxical mechanism, achieving straight-line motion via a spatial six-bar chain of revolute joints despite being over-constrained. Bennett \cite{bennett1903} introduced a four-bar spatial linkage that defies the Gr\"ubler formula through special equalities in its link lengths and hinge angles. Shortly thereafter, Bricard classified the only possible single-loop 6R linkages, describing several families of mobile octahedral mechanisms that remain flexible under some symmetric geometrical conditions \cite{bricard1897}. Delassus \cite{delassus1922} expanded these insights—examining exotic cases with helical joints and proved that aside from Bennett’s case, no other generic 4R loop can have finite mobility (e.g., \cite{leea2011synthesize}). Myard \cite{myard1931} combined two Bennett linkages into a single five-bar closed chain, a design later recognized as a special case of Goldberg’s broader family of paradoxical linkages. Goldberg further generalized this approach by fusing Bennett loops to obtain new one- degrees-of-freedom  five-bar and six-bar linkages \cite{goldberg1943}. Decades later, Waldron \cite{waldron1967,waldron1979} catalogued additional over-constrained architectures and developed algebraic criteria to systematically derive them (see a review by \cite{baker1979}). More recently, Wohlhart \cite{wohlhart1987,wohlhart1991} discovered novel 6R linkages (for example, by merging two Goldberg 5R chains) and further solidified the modern understanding of paradoxical mobility.
\vspace{5mm}
While investigating classical paradoxical linkages, we unexpectedly discovered an inverse phenomenon. Specifically, we identified mechanisms that, despite appearing mobile under standard degrees-of-freedom analysis, are in fact immobile due to subtle geometric constraints. We term these systems \textit{hypo-paradoxical linkages}, a class of mechanisms that, to the best of our knowledge, has not been previously recognized in the literature.
In this work, we introduce and formalize this concept, presenting a systematic study of such linkages and uncovering the latent geometric constraints responsible for their unexpected rigidity. Using the same analytical tools and geometric reasoning, we also provide an intuitive explanation for the Bennett mechanism’s single degree of freedom. 

\begin{definition}
A \textit{hypo-paradoxical linkage} is a closed kinematic chain \( \mathcal{L} \) for which the classical mobility formula (e.g., Gr\"ubler–Kutzbach) predicts a positive number of degrees of freedom, \( M(\mathcal{L}) > 0 \), yet its actual configuration space has dimension zero, i.e., \( \dim \mathcal{C}(\mathcal{L}) = 0 \). In other words, all admissible configurations (up to rigid-body motions) are isolated.
\end{definition}
This paper formalizes the notion of hypo-paradoxical linkages and exhibits a surprisingly large family of these linkages. The examples provided reveal the limitations of existing mobility formulas and suggest the existence of entire classes of hidden constraints that require systematic investigation.
\section{Closed kinematic chains}
Recall Chebyshev–Gr\"ubler–Kutzbach criterion for spatial linkages
$M = 6(n - 1 - j) + \sum_{i=1}^j f_i$, where \( n \) is the number of links (including the ground), and \( j \) is the number of joints, and \( f_i \) is the number of degrees-of-freedom at the \( i \)-th joint. For an $n$-link closed chain, the formula reads
\begin{equation}
M = 6(n - 1 - n) + n = n - 6.
\label{eq:Grubler}
\end{equation}

Thus, the expected mobility for \( n = 7 \), for example, is predicted to be $1$ (e.g. Bricard-type spatial linkages \cite{bricard1897octaedre}), for $n=8$, Equation \ref{eq:Grubler} implies $M=2$ etc.

To facilitate the forthcoming analysis, we briefly recall the standard screw-theoretic formulation of revolute joints and their associated exponential mappings. 

Each revolute joint is characterized by a screw axis \( \bxi_i \in \mathbb{R}^6 \), expressed in the space (fixed) frame. The closure condition of the kinematic chain is verified by evaluating the product of exponentials:
\begin{equation}
\prod_{i=1}^{n}  e^{\hat{\bxi}_i \theta_i} = 
\begin{bmatrix}
I & 0 \\
0 & 1
\end{bmatrix}
\label{eq:closure}
\end{equation}

The screw axis \( \bxi_i \) corresponding to joint \( i \) is given by:
\begin{equation}
\bxi_i = 
\begin{bmatrix}
\bomega_i \\
\bq_i \times \bomega_i
\end{bmatrix}
\label{eq:screw}
\end{equation}
where \( \bomega_i \in \mathbb{R}^3 \) is the unit rotation axis, and \( \bq_i \in \mathbb{R}^3 \) is any point on the axis.

The twist \( \bxi_i \) can be expressed in matrix form using the \textit{hat operator}:
\begin{equation}
\hat{\bxi}_i =
\begin{bmatrix}
[\bomega_i]_\times & \bq_i \times \bomega_i \\
\bm{0} ^\top & 0
\end{bmatrix}
\label{eq:xi}
\end{equation}
where \( [\bomega_i]_\times \in \mathbb{R}^{3 \times 3} \) denotes the skew-symmetric matrix:
\begin{equation}
[\bomega_i]_\times = 
\begin{bmatrix}
0 & -\omega_{i3} & \omega_{i2} \\
\omega_{i3} & 0 & -\omega_{i1} \\
-\omega_{i2} & \omega_{i1} & 0
\end{bmatrix}
\label{eq:omega}
\end{equation}

Note that Equation \ref{eq:closure} (and its derived differential equation) does not depend on the realization of the physical links connecting them and neither by the choice of the coordinate systems. In other words:
\begin{observation} 
 The mobility (modulo rigid body motions) of a closed linkage depends only on its screw axes \(\{\xi_1, \dots, \xi_n\}\). That is, it is independent of the physical lengths, shapes, or masses of the links connecting the joints.
\label{lemma:axis}
\end{observation}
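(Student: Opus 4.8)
The plan is to show that the internal configurations of the chain form exactly the solution set of the closure equation~\eqref{eq:closure}, and that this equation is a function of the joint variables $\theta_1,\dots,\theta_n$ and the screw axes $\xi_1,\dots,\xi_n$ alone.

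First I would fix an arbitrary admissible configuration of $\mathcal{L}$ as a reference, attach a fixed frame to the ground link, and read off the base-frame screw axis $\xi_i$ of each revolute joint in that configuration via~\eqref{eq:screw}. The product-of-exponentials (Brockett) formula then says that the rigid displacement accumulated by traversing the loop once, from the ground link back to itself, is $\bigl(\prod_{i=1}^{n} e^{\hat\xi_i\theta_i}\bigr)\, g_0$, where $g_0$ is that displacement in the reference configuration; since the reference configuration closes, $g_0 = I$, and the admissibility condition on $(\theta_1,\dots,\theta_n)$ is precisely $\prod_{i=1}^{n} e^{\hat\xi_i\theta_i} = I$. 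In this derivation the links enter only by placing each axis in space in the reference configuration — that is, through the data $(\bomega_i,\bq_i)$, hence through $\xi_i$ — while their lengths, shapes, and masses never appear. Consequently two linkages presenting the same ordered tuple $(\xi_1,\dots,\xi_n)$ in a common reference configuration have literally the same configuration space $\mathcal{C}(\mathcal{L}) = \{\,\theta\in(S^1)^n : \prod_{i=1}^{n} e^{\hat\xi_i\theta_i} = I\,\}$, and hence the same dimension.

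Next I would verify that this joint-angle model is already the quotient by rigid body motions, so that the phrase ``modulo rigid body motions'' is automatically accounted for. Acting on the whole mechanism by $g\in SE(3)$ leaves every $\theta_i$ unchanged and replaces $\xi_i$ by $\mathrm{Ad}_g\xi_i$; because $\prod_{i=1}^{n} e^{\widehat{\mathrm{Ad}_g\xi_i}\,\theta_i} = g\bigl(\prod_{i=1}^{n} e^{\hat\xi_i\theta_i}\bigr)g^{-1}$, the transformed chain satisfies closure exactly when the original one does, while the parameter point $\theta$ is fixed by the action. Thus $M(\mathcal{L}) = \dim\mathcal{C}(\mathcal{L})$ — read either pointwise as the local dimension of $\mathcal{C}(\mathcal{L})$, or infinitesimally as $n$ minus the generic rank of the Jacobian of $\theta\mapsto\prod_{i} e^{\hat\xi_i\theta_i}$ at a configuration — is a function of $(\xi_1,\dots,\xi_n)$ only.

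The only delicate point, and the one I would state most carefully, is exactly what ``depends only on the screw axes'' should mean: it is the link geometry that positions the axes, so the correct assertion is that the tuple of axes evaluated in one fixed reference configuration is a sufficient statistic. Once it is made explicit that deforming a link while holding its two incident joint axes fixed in space changes none of the $\xi_i$, the claim reduces to the coordinate-independence of~\eqref{eq:closure} already noted above, and there is no further obstacle.
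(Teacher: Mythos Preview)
Your proposal is correct and follows the same idea the paper uses: the closure equation~\eqref{eq:closure} involves only the screw axes $\xi_i$ and the joint angles $\theta_i$, so the configuration space---and hence the mobility---cannot see any additional link data. The paper, however, treats this as a one-line observation (the sentence immediately preceding the statement) rather than giving a proof; your write-up is considerably more careful, making explicit the product-of-exponentials derivation, the identification of $\mathcal{C}(\mathcal{L})$ with the solution set of~\eqref{eq:closure}, and the invariance under the $SE(3)$ action via $\mathrm{Ad}_g$. Your final paragraph, clarifying that the axes are read off in a fixed reference configuration and that this tuple is the ``sufficient statistic,'' is a point the paper leaves entirely implicit.
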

This is further clarified  in the proof of Theorem \ref{thm}.

 \section{Regular $n$-gon linkages }

Consider, now, a spatial closed-chain linkage conssiting of $n$ revolute joints connected by unit-length links (as in Figure \ref{f:regular_ngon} for $n=7$).  Each joint is associated with a revolute screw axis.   The first joint axis is represented by a unit vector \( \bomega_1 \in \mathbb{R}^3 \), and passes through a point \( \bq_1 \in \mathbb{R}^3 \). Each subsequent joint axis \( \bomega_i \) is obtained by rotating \( \bomega_1 \) by an angle \( \alpha_i = \frac{2\pi(i - 1)}{n}   \)  about the global \( z \)-axis . The corresponding point \( \bq_i \) on the \( i \)-th axis is also obtained by rotating \( \bq_1 \) similarly, i.e.  $\bomega_i = R_z(\alpha_{i-1})\bomega_1$, and $\bq_i = R_z(\alpha_{i-1}) \bq_1
$ for \( i = 1, 2, \dots, n \).   

Note that there are two degenerated cases which we shall exclude for reasons that will be explained in due course:
   \begin{enumerate}[label=(\roman*),leftmargin=*]
    \item The case where $\bomega_1 = \hat{\bm{k}}$ i.e., axes are pointing up.
    \item The case where $\{ \xi_i\}_{i=1}^n$ form a pencil of lines passing through a common point on the axis of symmetry (e.g., all lines intersect at the midpoint of the linkage).
\end{enumerate}
\begin{notation}
We call such linkages \emph{(generic) regular $n$-gon linkages}.    
\end{notation}
\begin{figure}[htb]
\centering
\scalebox{0.5}{\includegraphics{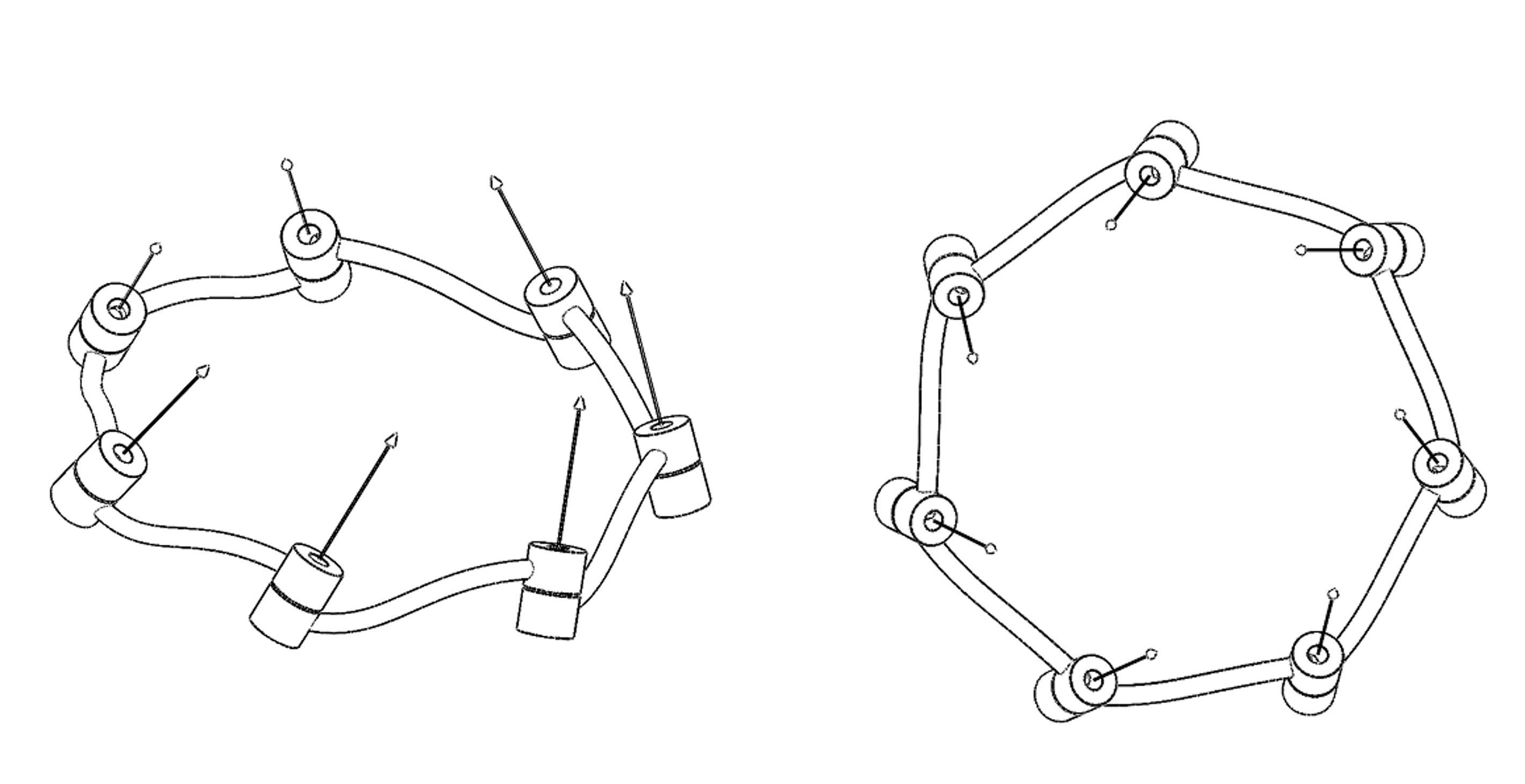}}
\caption{Two projections of a regular $7$-gon linkage which is a hypo-paradoxical linkage.}
\label{f:regular_ngon}
\end{figure}
%
 \subsection{Mobility analysis}

While the Jacobian matrix of a linkage describes the space of instantaneous velocities near a given configuration, the vanishing of all such velocities—i.e., a zero tangent space—does not necessarily imply immobility. This is because the Jacobian captures only first-order (infinitesimal) behavior. A linkage may still admit higher-order infinitesimal motions that escape detection at the linear level. Therefore, establishing rigidity requires analyzing whether any continuous path exists in configuration space, beyond simply checking whether the initial velocity vanishes.
Nevertheless, deriving a full parametric polynomial representation of the configuration space is typically out of reach due to the algebraic complexity of the constraints involved (furthermore, such an analysis would typically be stated for a specific linkage).

\indent Hence we turn to exploit the symmetry inherent in regular \(n\)-gon linkages. Recall that given three mutually skew spatial lines there exists a unique doubly ruled quadric, passing through them which is (generically) a hyperboloid of one sheet.  One of its two ruling families consists exactly of the lines that intersect all three; this ruling family is called the \emph{regulus} determined by the triple. These two reguli (and the associated hyperboloid) arise naturally in our construction.\\

\noindent We can further claim that:

\begin{theorem} 
 A generic regular $n$-gon linkage  is a hypo-paradoxical linkage.  
 \label{thm}
\end{theorem}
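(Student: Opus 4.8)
The claim is that a generic regular $n$-gon linkage has zero-dimensional configuration space despite Grübler predicting $M = n - 6 > 0$ for $n \geq 7$. Let me think about the key structural fact the authors have flagged: all $n$ screw axes $\xi_i$ lie on one regulus of a hyperboloid of one sheet.

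**The approach.**

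I want to use the reciprocal screw / wrench argument. If all joint axes $\xi_1, \dots, \xi_n$ lie on a single regulus of a quadric, then the lines of the *opposite* regulus are each reciprocal to all of them (lines in opposite reguli intersect, and intersecting lines have zero reciprocal product when both are zero-pitch). The opposite regulus is a 1-parameter family, so the space of wrenches reciprocal to all joint screws has dimension at least... well, I need to count. Actually the span of the $\xi_i$ (a subspace of the 6-dimensional screw space) — if they lie on a regulus, that span is at most... the regulus lies on a quadric, and I recall the screws of a regulus span a 3-dimensional subspace of $\mathbb{R}^6$? No — let me reconsider. The lines of a regulus form a 3-parameter... no, a 1-parameter family of lines, but as points in the Klein quadric $\subset \mathbb{P}^5$ they trace a conic, and a conic spans a plane, i.e., a 3-dimensional linear subspace of $\mathbb{R}^6$. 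So $\mathrm{span}\{\xi_i\} \le 3$, hence $\mathrm{rank}\, J \le 3$ at the straight (planar) configuration, where $J = [\xi_1 \cdots \xi_n]$ is the instantaneous closure Jacobian. The infinitesimal mobility is $n - \mathrm{rank}\, J \geq n - 3$, which is huge — this looks like it supports *more* motion, not rigidity!

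So first-order analysis alone won't do it; the authors themselves warn about this. The real proof must be a genuine finite/global argument. Here's the plan I'd actually pursue: **exploit the $\mathbb{Z}_n$ cyclic symmetry plus the hyperboloid constraint to show the closure equations force all $\theta_i$ equal, then show equal angles force $\theta_i = 0$.** Concretely: (1) Parametrize configurations and observe that a configuration where all $\theta_i = \theta$ for a common value is the "symmetric mode." Show via the product-of-exponentials closure equation that, because the axes are the orbit of one line under $R_z(2\pi/n)$, the closure product telescopes — conjugating the symmetric motion by $R_z$ permutes the factors — so closure at common angle $\theta$ reduces to a single scalar equation in $\theta$, generically with only $\theta = 0$ as a solution in a neighborhood (excluding the degenerate cases (i),(ii) precisely to kill spurious solution branches). (2) For the asymmetric modes: use the regulus geometry to show that any closed configuration must keep all axes on a common hyperboloid of one sheet (or a degeneration thereof), and the regular $n$-gon with $n$ points constrains the hyperboloid so rigidly (an $n$-gon inscribed with cyclic symmetry, $n \geq 7 > $ the number of parameters of a quadric fixing enough axes) that the only deformation is the trivial one. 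The combinatorial over-determinacy — $n$ cyclically-related axes must stay on a quadric family of bounded dimension — is what converts "should move" into "doesn't."

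**The main obstacle.**

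The hard part is rigorously ruling out the higher-order / asymmetric infinitesimal motions — exactly the gap between $\mathrm{rank}\, J = 3$ (suggesting $n-3$ dof) and $\dim \mathcal{C} = 0$. The Jacobian is maximally degenerate here, so I expect I'll need to go to second order (the Hessian / the obstruction in $\mathrm{Sym}^2$ of the tangent space), or better, find an algebraic first integral: a function on configuration space that is constant (e.g., "the six quadratic conditions expressing that the moving axes still lie on *a* quadric through the fixed ones") whose level set through the base configuration is a point. I'd bet the cleanest route is: show the map sending a configuration to the Plücker coordinates of the deformed axes must land in the $3$-plane spanned by the original regulus (this is forced by closure, via a reciprocity/annihilator argument on the cycle of screws), then the regular-$n$-gon incidence pattern inside that $3$-plane, combined with $\mathbb{Z}_n$-equivariance, pins the configuration. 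Making the "closure forces axes to stay coplanar in $\mathbb{P}^5$" step precise — and checking it genuinely uses $n \ge 7$ and the exclusion of the two degenerate cases — is where the real work lies.
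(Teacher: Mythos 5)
There is a genuine gap: your submission is a proof \emph{plan}, not a proof, and you say so yourself --- the two steps that would actually establish rigidity (``closure forces the deformed axes to stay in the $3$-plane of the original regulus in $\mathbb{P}^5$'' and ``the regular-$n$-gon incidence pattern inside that $3$-plane pins the configuration'') are flagged as ``where the real work lies'' and never carried out. Worse, the first of these is doubtful as stated: the paper's own discussion of the Bennett linkage shows that a closed chain can keep all of its axes on a (continuously varying) regulus throughout a genuine one-parameter motion, so ``the axes remain on a common quadric'' cannot by itself be the mechanism of rigidity; some additional ordering information is indispensable, and your sketch never identifies it. Your opening observation is, however, correct and worth keeping: since the regulus is a conic on the Klein quadric, $\mathrm{span}\{\xi_i\}$ has dimension at most $3$, so the linkage has at least $n-3$ independent first-order flexes --- the first-order count points toward \emph{more} mobility, and any valid proof must be genuinely non-infinitesimal.

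The paper's argument is entirely different and much more elementary. It picks a single line $\mathcal{L}'$ from the \emph{conjugate} regulus; every joint axis $\xi_i$ meets $\mathcal{L}'$, and because ruling lines are pairwise skew the intersection points $\xi_1\cap\mathcal{L}', \dots, \xi_n\cap\mathcal{L}'$ occur in monotone order along $\mathcal{L}'$. By Observation~2.1 (mobility depends only on the screw axes, not on where the physical joints sit on them), the linkage is kinematically equivalent to one whose joints are those consecutive intersection points on $\mathcal{L}'$. That equivalent closed chain is fully stretched: the closing link has length equal to the sum of all the others, so by the triangle inequality its configuration space is a single point, even though it is infinitesimally flexible. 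The two excluded degenerate cases are exactly those where no such transversal exists or where the monotone ordering fails. If you want to salvage your route, the missing ingredient you would have to import is precisely this monotonicity of the intersection pattern --- it is what distinguishes the rigid regular $n$-gon from the mobile Bennett linkage, whose axes also lie on a regulus but meet the conjugate ruling out of order.
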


\begin{proof}

The linkage described in Figure~\ref{f:regular_ngon} can be understood as a physical realization of a segment of a regulus. This surface arises naturally when a cylindrical bundle of strings is twisted (see Figure \ref{f:Hyberboic}). The initially vertical generatrices become skew lines lying on a one-sheeted hyperboloid ruled surface $\frac{x^2}{a^2} + \frac{y^2}{a^2} - \frac{z^2}{c^2} = 1$, 
for some \( a, c \in \mathbb{R}^+ \).  The proof is four folded:

\begin{figure}[htb]
\centering
\scalebox{0.4}{\includegraphics{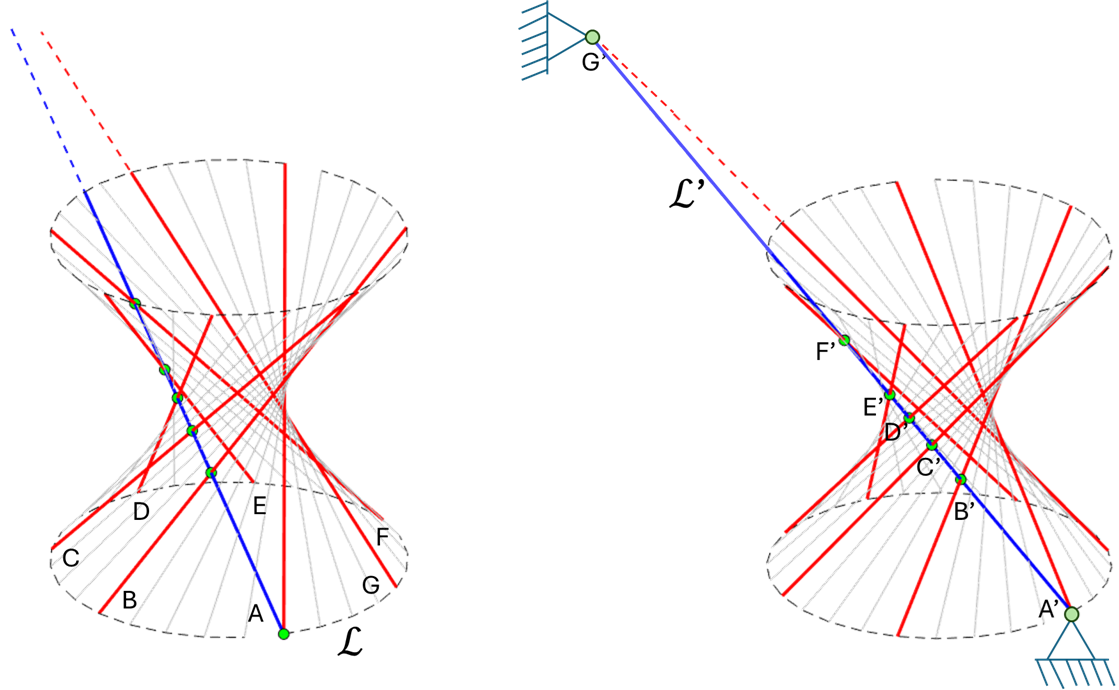}}
\caption{Two projections of a seven ruling lines regulus depicted as red lines. These lines then, intersects sequentially (green dots) a blue line from the conjugate set of lines (In fact, they intersect all lines in the conjugate set of lines). The regular $7$-gon linkage $\LL$ with joints $ABCDEFG$ is located at the lower base of the regulus, with the red lines representing its joints. the blue aligned 7-bar $\LL'$  is realized on the conjugate blue line $A'B'C'D'E'F'G'$.}
\label{f:Hyberboic}
\end{figure}

\begin{enumerate}
    
\item
Observation \ref{lemma:axis} states that from a kinematic perspective, the essential object of study is not the absolute position of each rigid link but the set of \emph{relative transformations} — i.e., the relative positions and orientations between all coordinate frames rigidly attached to the joints. Therefore, the specific embedding of the linkage in \(\mathbb{R}^3\) is irrelevant to its mobility. The only thing that matters is the screws' positions and orientations. 

{\color{black}To see why this holds, consider a linkage composed of rigid links and fixed shafts (the joint axes).
At each end of a link, imagine a cylindrical sleeve that rotates about the corresponding shaft.
Now conceptually dilate these sleeves indefinitely along the direction of each axis.
In doing so, imagine every rigid link transformed into a (rigid) ruled surface whose generators are  the corresponding pair of joint axes. Obviously, the “dilated” linkage behaves exactly like the original one, (disregarding possible self-collisions between them).

From this dilated linkage  one can realize an infinite family of “equivalent” linkages by selecting an arbitrary continuous curve on each surface, stretching from one axis to the other.}

\item
In particular, since the joint axes of the regular \( n \)-gon $\LL$ lie on a regulus, and since any such regulus intersects the second ruling family of the regulus (called the \emph{conjugate regulus}) \cite{gallucci1906studio} , one can choose a specific line \( \LL' \) from the second family such that all joint axes intersect \( \LL' \) see Figure \ref{f:Hyberboic}.

\item
Moreover, the structure of the hyperboloid reguli ensures that the intersections $\xi_i \cap \LL'$ of the axes \( \xi_1, \dots, \xi_n \) of the linkage $\LL$   with the line \( \LL' \) occur in a sequential monotonic order along \( \LL' \) (i.e., the axis \(\xi_1\) intersects \(\LL\) first, \(\xi_2\) second, and so on).  {\color{black} Explicitly, consider the canonical hyperboloid $x^2+y^2-z^2=1$.  Set $n$ equally spaced points $\C_j=\big(\cos\frac{2\pi j}{n},\sin\frac{2\pi j}{n},0\big)$ on the middle circle   on its waist.  The ruling line
$\LL_{k}(t)$ is given by  $\big(\cos\frac{2\pi k}{n},\sin\frac{2\pi k}{n},0\big)+ \ t\,\big(\sin\frac{2\pi k}{n},-\cos\frac{2\pi k}{n},1\big)$.

Each point $\C_j$ is mapped  to the intersection of the opposite ruling through that point with $\LL_k(t)$ at:
\begin{equation}
 t_j=  -\tan\!\Big(\tfrac{\pi (j-k)}{n}\Big) 
\end{equation}
which are ordered along the extended line $\LL_{\alpha}(\mathbb{R})\cup\{\infty\}$ in the same cyclic order as the $n$ points on $\C$ (possibly reversed).  Finally, 
since all reguli $x^\top Ax=1$ are orientation-preservingly \emph{diffeomorphic} via the linear map $x\mapsto Mx$, for $M\in GL(n,\R)$,  this concludes the claim (for a deeper discussion on quadrics and order preserving transformations see \cite{harris2013algebraic}).

}

In the non-generic cases which we exclude here, this equivalence fails to occur:

  \begin{enumerate}[label=(\roman*),leftmargin=*]
    \item The case where no line $\LL'$ intersects all axes. that is, in the case  $\bxi_1 = \hat{\bm{k}}$ 
    \item In cases where one or more lines intersect all the axes but the order of intersection is not preserved, this occurs when the screw $\{\xi_i\}_{i=1}^n$ form a pencil of lines.
\end{enumerate}
\item
A screw ordering (say $A\to B \to C \to  \dots$, as in Figure \ref{f:regular_ngon}) implies that the linkage can be equivalently realized, without changing the relative geometry of its axes, as a closed chain where all the joints are aligned along a common direction and placed consecutively $A'\to B' \to C' \to  \dots$ along \( \LL' \).
 The preservation of the order of intersection, ensures therefore, that the linkage $\LL'$ is not folded on itself but is globally aligned and thus the configuration is \emph{kinematically locked}. 
 
{\color{black} As depicted in Figure \ref{f:algined}, the corresponding straight bar-on-line realization of the linkage possesses an isolated solution to the closure constraints.}
 \end{enumerate}
That is, no motion is possible despite the system appearing to have positive mobility under naive application of the Chebyshev–Gr\"ubler–Kutzbach criterion - the linkage is \emph{hypo-paradoxical}.
 \begin{figure}[h]
\centering
\scalebox{0.5}{\includegraphics{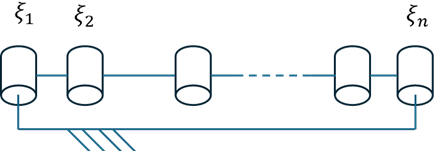}}
\caption{An aligned configuration of a linkage is kinematically locked.}
\label{f:algined}
\end{figure}

\end{proof}

A supplementary video is provided, illustrating how the regular 7-gon mechanism becomes locked, see \cite{Shvalb2025}.\\
 This can be stated as:

\begin{proposition}
\label{thm:hypo}
A closed linkage $\LL$ having a set of screws $\{\xi_i\}_{i\in\II}$ all intersecting a line in a monotone manner is hypo-paradoxical.
\end{proposition}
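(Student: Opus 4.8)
The plan is to reduce the closed-chain closure equation to a scalar constraint along the common transversal line and show that monotonicity forces that constraint to have only isolated solutions. First I would set up coordinates so that the transversal line $\LL'$ is, say, the $z$-axis, and parametrize the intersection point of $\xi_i$ with $\LL'$ by a height $t_i \in \R$. Each screw $\xi_i$ is then determined by its height $t_i$ and a direction $\bomega_i$ transverse to $\LL'$; the revolute motion at joint $i$ is a rotation about $\xi_i$. Because every axis meets the common line, the group elements $e^{\hat\bxi_i\theta_i}$ all fix $\LL'$ setwise in a controlled way — more precisely, each one maps the point of $\LL'$ at height $t_i$ to itself, so tracking the image of a moving point along $\LL'$ through the product of exponentials gives a one-dimensional ``unrolling'' bookkeeping quantity. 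This is the analogue of replacing $\LL$ by the aligned chain $\LL'$ from the proof of the theorem: the consecutive joints act like the joints of a planar (indeed collinear) linkage whose link lengths are the gaps $t_{i+1}-t_i$.

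Next I would make precise the sense in which the aligned realization $\LL'$ is ``globally stretched out.'' The key inequality is that monotonicity of the $t_i$ — say $t_1 < t_2 < \dots < t_n$ — means the signed distances traversed along $\LL'$ between consecutive joints all have the same sign, so their sum (which must vanish for closure, since $\LL'$ returns to its starting point) can only vanish if each individual displacement contributes at an extreme, pinning the configuration. Concretely, I would project the closure condition \eqref{eq:closure} onto the coordinate along $\LL'$: the total translation in that direction is a sum of terms, one per joint, each of which is a monotone (in fact, I expect, one-signed) function of the bend at that joint, and the only way a sum of same-signed quantities can equal the required value is the boundary case, which is rigid. A clean way to package this is a telescoping/convexity argument: the endpoint of the chain measured along $\LL'$ is a strictly monotone function of an appropriate ``total turning'' parameter, so the closure locus is a single point, i.e. $\dim\C(\LL)=0$, while Grübler still gives $M(\LL) = n-6 > 0$ for $n \ge 7$.

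The step I expect to be the main obstacle is justifying rigorously that ``aligned and order-preserving'' implies \emph{zero-dimensional} configuration space rather than merely infinitesimally rigid, since the excerpt itself flags that a vanishing tangent space is not enough. The danger is a higher-order or large-amplitude motion that keeps the chain closed while sliding intersection points past one another; the monotonicity hypothesis is exactly what should forbid this, but turning ``the order of intersections cannot change without a forbidden intersection'' into an honest global statement about the configuration variety requires care. I would handle it by showing the relevant coordinate functional along $\LL'$ is not just locally but globally monotone on the constraint set — e.g. by exhibiting it as a strictly monotone function of a single scalar and invoking that a closed chain must return that functional to its initial value — so that the entire configuration space collapses to the starting configuration (and its mirror, if the reversal symmetry is present). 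The remaining pieces — that the excluded degenerate cases (i) and (ii) are precisely the ones where no order-preserving transversal exists, and that Grübler's count is unaffected — are routine given Observation~\ref{lemma:axis} and Equation~\eqref{eq:Grubler}.
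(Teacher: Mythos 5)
Your overall strategy is the paper's: replace $\LL$ by the equivalent chain $\LL'$ whose joints sit at the intersection points $A_i'=\xi_i\cap\LL'$ (legitimate by Observation~\ref{lemma:axis}), and argue that an order-preserving aligned closed chain is maximally stretched out and hence locked. But the step that carries the whole weight of the proposition --- ``stretched implies $\dim\C(\LL)=0$'' --- is garbled as written. You say the signed gaps $t_{i+1}-t_i$ ``all have the same sign, so their sum (which must vanish for closure) can only vanish if each individual displacement contributes at an extreme''; a sum of same-signed quantities vanishes only if every term is zero, so this sentence proves nothing. The closing link $A_n'A_1'$ contributes the \emph{opposite} sign, with magnitude equal to the sum of the others, and that near-miss is exactly the content you need. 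Likewise your proposed repair --- exhibiting the coordinate along $\LL'$ as ``a strictly monotone function of a single scalar (total turning) parameter'' --- is not available: the constraint set lives in the $n$-torus of joint angles and is not a priori parametrized by one scalar, so there is no such variable to be monotone in.

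The clean global argument (which the paper itself invokes, in passing, in the discussion of Figure~\ref{f:paradoxical_linkage_1}) is the degenerate polygon inequality. In any configuration the quantities $\operatorname{dist}(A_i',A_{i+1}')$ and $\operatorname{dist}(A_n',A_1')$ are invariants of the motion, since each such pair lies on a single rigid link; in the aligned, order-preserving configuration they satisfy $\operatorname{dist}(A_1',A_n')=\sum_{i=1}^{n-1}\operatorname{dist}(A_i',A_{i+1}')$, i.e.\ equality in the iterated triangle inequality. Both sides are constant along any path in $\C(\LL)$, so equality persists, and equality forces the $A_i'$ to remain collinear \emph{and in order} in every configuration; with the ground link (hence $A_1'$ and $A_n'$, hence the line $\LL'$) held fixed, every $A_i'$ is pinned to its original position. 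This is a global statement, not an infinitesimal one, so it already addresses the higher-order-motion worry you raise. One further step, which you omit, is still needed: pinned joint positions do not by themselves pin joint angles. Here you use that each $\xi_i$ meets $\LL'$ in the single point $A_i'$ (in particular $\xi_i\neq\LL'$), so $A_{i+1}'\notin\xi_i$ and $e^{\hat{\bxi}_i\theta_i}A_{i+1}'=A_{i+1}'$ forces $\theta_i=0$; induction along the chain gives $\theta\equiv 0$, hence $\C(\LL)$ is a single point while \eqref{eq:Grubler} still predicts $M=n-6>0$.
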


Up to rigid motion, every regular $n$-gon linkage admits a unique waist realization in which the axes $\{\xi_i\}_{i=1}^n$ have no radial component relative to the symmetry axis of the associated regulus. This realization is parametrized by a single elevation angle $\theta\in(0,\pi)$.
 
\label{waist}

%
\section{From Aligned Axes to Hypo-Paradoxical Linkages}

Carefully considering Proposition \ref{thm:hypo} it is evident that the family of hypo-paradoxical linkages can be trivially extended by considering other ruled surfaces such as \emph{Elliptic regulus, Parabolic Conoid}  and so on. This implies a much larger linkage set of hypo-paradoxical mechanisms:

\begin{figure}[h]
\centering
\scalebox{0.5}{\includegraphics{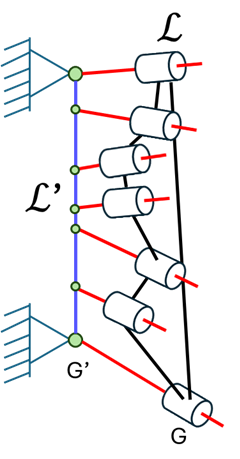}}
\caption{A generic hypo-paradoxical linkage $\LL$ generated reversly from $\LL'$.}
\label{f:general_hypo}
\end{figure}

Applying an opposite perspective can be even more productive  (see Figure \ref{f:general_hypo}): (i) Fix a spatial line $\LL'$; (ii) Specify a collection of screw axes   $\{\bxi_i\}_{i=1}^{n}$ such that each \(\bxi_i\) intersects (or is tangent to) \(\LL'\); (iii) This setting can therefore be thought of as a positively aligned linkage and therefore, kinematically locked (rigid); (iv) For the moment, regard the axes \(\{\bxi_i\}\) as abstract kinematic data, without committing to any particular link geometry; (v) Arbitrarily select a point on each axis and connect these points sequentially into a spatial linkage. These attachment points should be chosen in the order induced by their appearance along the line $\LL'$. Thus, we obtained a generic hypo-paradoxical closed kinematic chain $\LL'$.

In this context, if the mapping of the joints of $\LL$  to those of $\LL'$, are cyclically shifted once, the linkage is also rigid as exemplified in Figure \ref{f:paradoxical_linkage_1}.
Intuitively, since joints $\{A_i\}_2^n$ are bound to rotate perpendicularly to the first axis the linkage is hypo-paradoxical. 
In the context of the discussion above, notice that the linkage is hypo-paradoxical  since the triplet of lengths of link $A_1'A_2'$ in $\LL'$, \ the distance $A_1'A_n'$ and the sum of the links' lengths  $\sum_{i=2}^{n-1} A_i'A_{i+1}'$ satisfy the triangle inequality only in this configuration.

\begin{figure}[h]
\centering
\scalebox{0.5}{\includegraphics{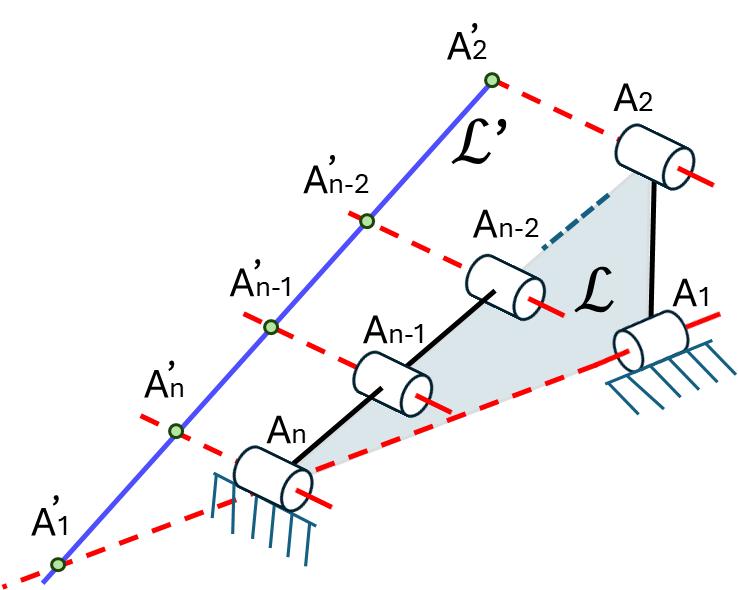}}
\caption{An hypo-paradoxical $n$-bar linkage. The intersection of all screws with $\LL'$ is mapped via a single cyclic  shift.}
\label{f:paradoxical_linkage_1}
\end{figure}

Note, however, that in cases where the mapping from $\LL$ to $\LL'$ is not ordered, the linkage is not expected to be hypo-paradoxical as further explained in Section \ref{sec:bennet}. 
%

\section{Immobility Margin}

The geometry in which a linkage is perfectly aligned along a line \(\LL'\) occupies a set of \emph{measure zero} in its \emph{moduli space} (the space of all design parameters). Equivalently, the subset of realizations \(\LL\) that satisfy the alignment conditions described above also has measure zero.
This follows because:
\begin{enumerate}
    \item The moduli space of \( n \) spatial lines has dimension \( 4n \). Each line is represented by Pl\"ucker coordinates \(\bxi_i = (\mathbf{d}_i, \mathbf{m}_i) \in \mathbb{P}^5\), subject to the constraint \(\mathbf{d}_i \cdot \mathbf{m}_i = 0\). This reduces the six-dimensional ambient space to 4 degrees-of-freedom per line.

    \item Requiring that all \( n \) lines intersect a common line \( \bxi \) imposes \( n \) scalar constraints, each given by the bilinear Pl\"ucker incidence relation:
    \[
    \langle \bxi_i, \bxi \rangle = \mathbf{d}_i \cdot \mathbf{m} + \mathbf{d} \cdot \mathbf{m}_i = 0.
    \]
    However, since the line \( \ell \) is not fixed and itself has $4$ degrees of freedom, only \( n - 4 \) of these constraints are independent.
\end{enumerate}

Thus, the subset of configurations where all lines intersect a common line lies in a subvariety of codimension \( n - 4 \) in the \( 4n \)-dimensional moduli space. For \( n > 4 \), this codimension is positive, implying that such configurations form a set of measure zero in the generic case. In practice, this means that perfect alignment is highly non-generic: any small perturbation due to fabrication error can lead to a loss of alignment. Nevertheless, being close to alignment, is expected to result with a limited work space movement.\\

To capture how close a configuration is to this degenerate, immobile state, we define a measure of proximity of a linkage  to immobility:
\begin{definition} 
The \emph{immobility margin} $\IM$ of a linkage $\LL$ is defined as the infimum of the total distance between corresponding joints \( A_1, A_2, \dots,A_n \) of $\LL$ and the joints \( A_1', A_2', \dots, A_n \) of an aligned, equivalent linkage \( \LL'_i \), taken over all all-intersecting lines \( \{\LL'_i\}_{i \in \II} \).  Formally:
\begin{equation}
    \IM_\LL = \inf_{i \in \II} \left\{ \max_{j-1}^n \operatorname{dist}(A_j, A'_j) \right\}.
\end{equation}
\end{definition}
For example in Figure~\ref{f:general_hypo} and in Figure~\ref{f:margin} the immobility margin is $dist(G, G')$ the farthest  distance between the corresponding joints in $\LL$  and those in $\LL'$.\\
\begin{figure}[h]
\centering
\scalebox{.47}{\includegraphics{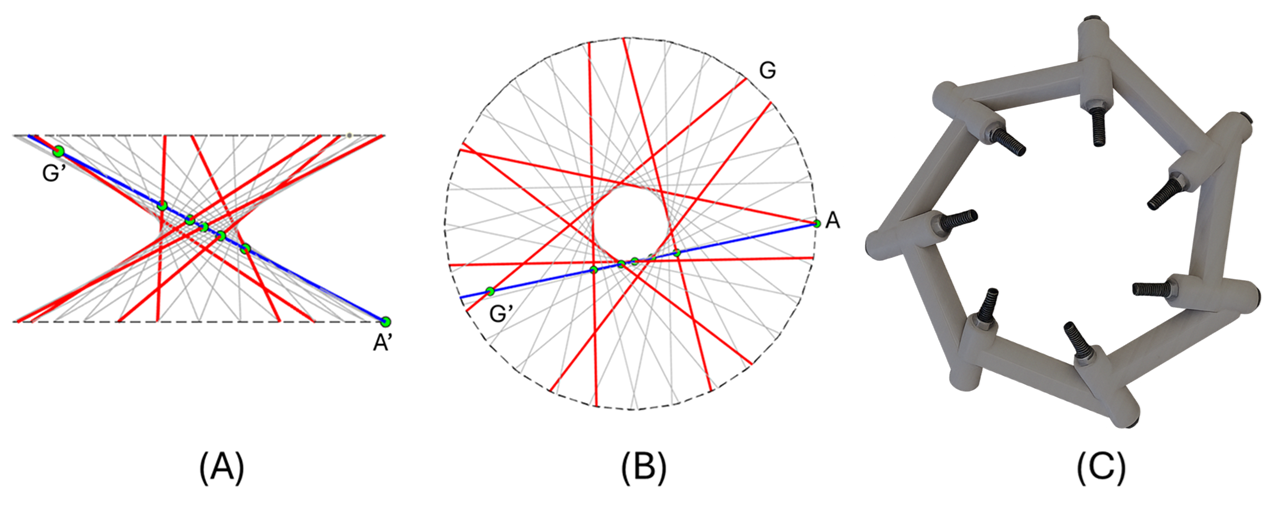}}
\caption{(A) and (B) A regular $7$-gon whose distance between $G$ and $G'$ smaller than that illustrated in Figure~\ref{f:Hyberboic}; (C) A 3D printer realization of a hypo-paradoxical regular $7$-gon with $\alpha=80^\circ$ which admits a workspace diameter of $\approx 0.07$ link-length.}
\label{f:margin}
\end{figure}
Intuitively, when the original linkage is geometrically "far from" its corresponding aligned configuration, even minor fabrication errors in the "original" linkage $\LL$ will prevent the equivalent linkage from achieving full alignment, making the immobility nature more sensitive to fabrication errors. 

To further examine the immobility margin reasoning, consider a general linkage as in Figure \ref{f:general_hypo} denoted by $\mathcal{L}$. Its corresponding aligned version is marked by $\mathcal{L}'$, as described above. Next, we assume that the joint screws of $\mathcal{L}$ are perturbed slightly so that they no longer lie on the common line; We denote this slightly non-aligned linkage   $\LL''$.  Fix two sets coordinates systems $o''_i$ and $o_i$ on each of the screws at $\LL''$ and at $\LL$ and bear in mind that each screw rotates about its neighboring counterparts. Thus, one can deduce the workspace (i.e., the maximal travel of $o_m$ for a certain $m$). knowing the travel of $o''_m$ together with the rotation of its  screw $\hat{\bxi}_m$. 

Let \( \{ e^{\hat{\bxi}_i \theta_i}\}_{i=1}^{n}   \) be the homogeneous transformation matrices representing successive joint motions about screws $\bxi_i= [\bomega_i,\bq_i \times \bomega_i]^\top$.  The middle joint $\hat{\bxi}_m$ at $\LL''$   undergoes translation. To bound the maximal translation possible consider an extreme case in which all screws are re-positioned onto a common plane, each at a constant distance from a reference line, in a zigzag pattern that alternates "above" and "below" the line. This configuration maximizes the effective linkage length and in the extreme case where all screw are parallel, results in the largest possible lateral displacement of the middle joint from the alignment line (i.e., and stretching the "planar" linkage by holding the middle vertex until it forms a triangle).

The \emph{upper bound} for the screw translation is therefore:
\begin{equation}
 D_{\LL''} =
\begin{cases}
2(n-1)M_{\LL''}\Delta\alpha, & \text{if } n \text{ is odd}, \\
2(n-2)M_{\LL''}\Delta\alpha, & \text{if } n \text{ is even}.
\end{cases}   
\label{eq:M}
\end{equation}

where the factor of $2$ reflects the reachability of the opposing configuration by symmetry.

\begin{remark}(Screw rotation) Consider linkage  $\LL''$ and its configuration space  $\{\theta_i \}_{i=1}^n$. There is a small \( \theta_0 \) which is the  maximal angular rotation travel possible \( \theta_i\le\theta_0, \forall i \) . The net angular rotation of  screw $\bxi_m$ under the map $T_m=\prod_{i=0}^{m-1}  e^{\hat{\bxi}_i \theta_i}$  is bounded by a $(m \theta_0)$-cone, i.e., $\angle(\bomega_m,  T_m \bomega_m) \leq m \theta_0$. Nevertheless, since $\LL''$ linkage is nearly aligned $\theta_0$ is very small (not infinitesimal). Therefore, we may assume that the screw translation is dominant over the screw rotation. 
\end{remark}

To quantify the relationship between immobility margin and actual workspace, we simulated the maximal diameter of the configuration space for regular $7$-gon linkages with inclination angles $\alpha = \{20^\circ,25^\circ,\cdots, 70^\circ\}$. For each $\alpha$, we applied a \emph{swarm particle} optimization algorithm that assumed small fabrication errors $\|\vec{\Delta\alpha}\|=0.5^\circ$, modeled as perturbations in the joint directions.  In Fig.~\ref{f:hist} the empirically observed maximal workspace diameters (diamond markers) are plotted versus \(\alpha\). The dashed curve is the theoretical bound for the workspace diameter \(D_{\LL''}\), calculated from the immobility margin according to Equation \ref{eq:M}. For each \(\alpha\) we ran $150$ Monte Carlo simulations, sampling random perturbation directions \(\vec{\Delta\alpha}\) of the same size; the distribution of the resulting workspace diameters is shown as two dimensional histograms beneath the simulated values.

\begin{figure}[h]
 \centering
\scalebox{0.5}{\includegraphics{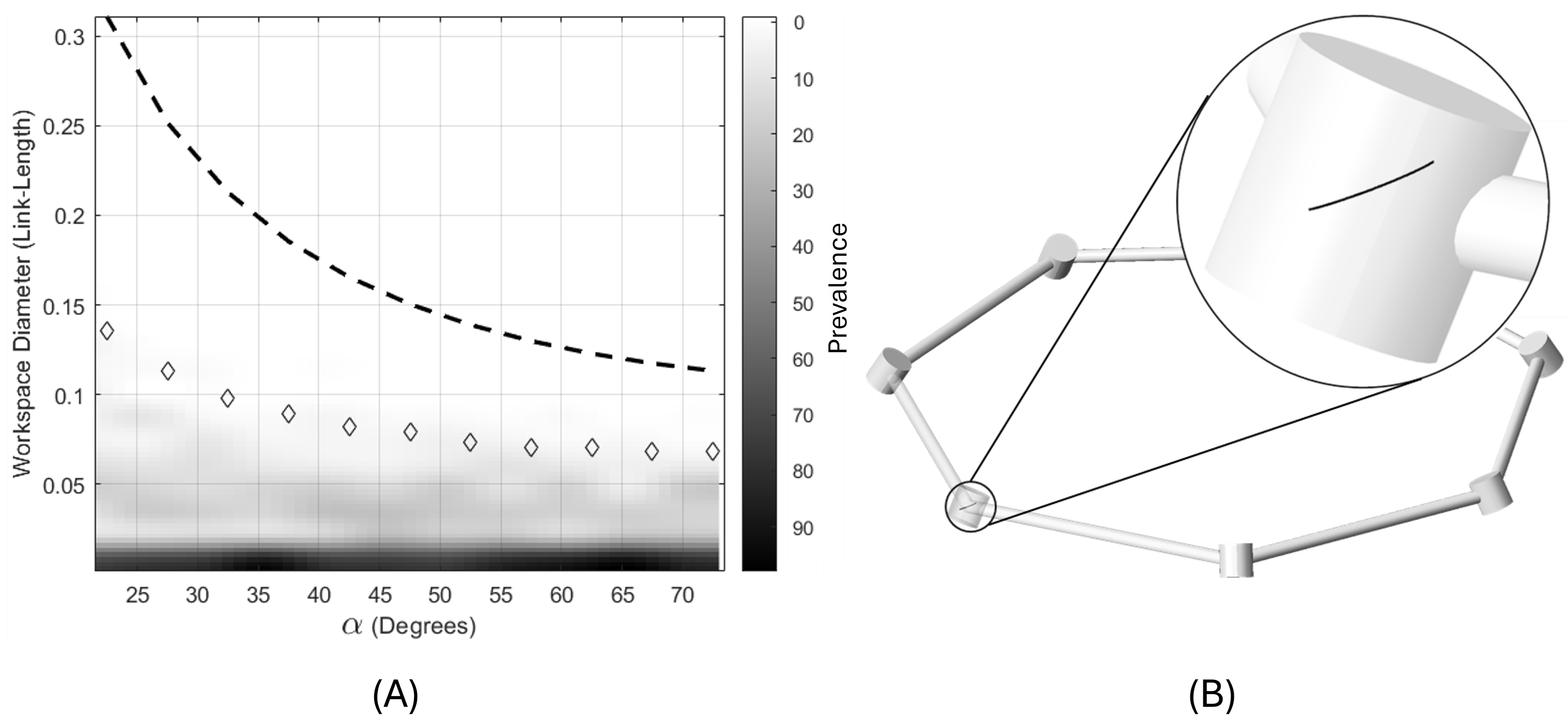}}
\caption{(A) The simulated experiment for different regular $7$-gons, parametrized by the elevation angle $\alpha$ and assuming $\|\vec{\Delta\alpha}\|=0.5^\circ$. The dashed line is the upper bound computed according to $12M_{\LL''}\Delta\alpha$ as in Equation \ref{eq:M}. The diamond markers indicate the maximal workspace diameters computed applying particle swarm optimization over $\vec{\Delta\alpha}$. The shaded part at the bottom correspond to a random set of $\vec{\Delta\alpha}$, i,e., a vertical portion of the figure $\alpha=\alpha_0$ can be viewed as an histogram  for a $7$-gon with $\alpha_0$ of a set of random fabrication errors $\vec{\Delta\alpha}$. The black shade at the bottom of the figure indicate that in most cases the workspace diameter of a fabricated $7$-gon is  small. (B) A simulated example for $\alpha=20^\circ$ and the resultant trajectory of middle joint (the small solid curve) when an error $\|\vec{\Delta\alpha}\|=0.5^\circ$ in the joint directions, is incorporated. }
\label{f:hist}
\end{figure}

\section{The Bennett Linkage Explained: Simplicity Behind the Paradox}
\label{sec:bennet}

The Bennett linkage is a well-known paradoxical mechanism which has one degree of freedom, despite the classical Chebyshev–Gr\"ubler–Kutzbach formula predicting $M=-2$. A correction to the formula by incorporating the order of the linkage's finite symmetry group \cite{hayes2025chebyshev} yields the correct mobility count. Nevertheless, although analytically powerful, the approach does not provide an intuitive geometric account of the Bennett mechanism’s mobility.

Following the same reasoning as above, the one-dimensional set of the Bennett linkage's joints axes trace a family of hyperboloid of one sheet surfaces,   i.e., the four revolute axes of the linkage lie along rulings of this surfaces (see Figure \ref{f:sheets}).  Explicit algebraic relations linking the parameters of the Bennett linkage and the configuration angle to the associated   hyperboloid are derived in~\cite{baker1988} (Baker calls them J-hyperboloids and further defines the L-hyperboloids which are defined by the links rather than the joints).\\

Consider the linkage at its maximally convex configuration $c_0$ as it is classically presented, that is, the angles between consecutive revolute axes attains their maximum. The linkage at $c_0$ defines a hyperboloid (say, in a canonical representation, having no coupling between $xz$ and $yz$). As the linkage evolves through its kinematic motion, the associated hyperboloid changes continuously from configuration to configuration  - its orientation and embedding in space vary but it consistently retains a hyperboloid of one sheet shape:
\[
\mathbf{x} ^\top Q \mathbf{x} = 0,
\]
where \( \mathbf{x} = (x,y,z, 1) ^\top \) are written in homogeneous coordinates
and \( Q \) is a symmetric \( 4 \times 4 \) matrix. The classification of these surfaces relies on the eigenvalue signature of the \( 3 \times 3 \) submatrix corresponding to the quadratic part. A surface is identified as a hyperboloid of one sheet if this submatrix has exactly two eigenvalues of one sign and one of the opposite sign, i.e., signature \( (2,1) \) or \( (1,2) \) (other quadric types, such as ellipsoids, paraboloids, or hyperboloids of two sheets, correspond to different combinations of positive, negative, and zero eigenvalues). The persistence of this signature across all configurations indicates that the hyperboloid of one sheet structure is preserved throughout.
\begin{figure}[htb]
\centering
\scalebox{.5}{\includegraphics{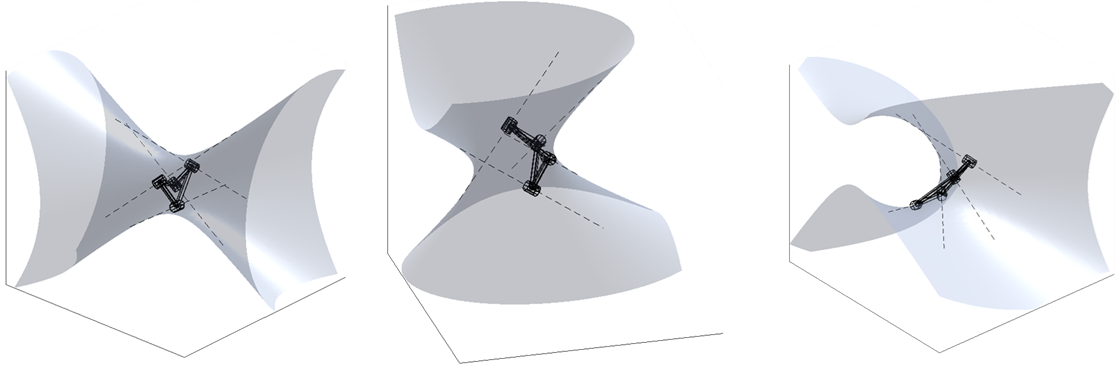}}
\caption{Three configurations of a given Bennett linkage with $\theta=0^\circ,40^\circ, 100^\circ$, each shown with the corresponding hyperboloid of one sheet whose rulings contain the four revolute axes.}
\label{f:sheets}
\end{figure}

Since each hyperboloid of one sheet admits two distinct families of rulings, it follows, as before, that there exists a one-parameter infinite family of lines (the conjugate set), that intersect all four axes at every configuration. In particular, for each step along the motion, one can select a specific ruling $\LL'$ that intersects all four revolute axes. 

\begin{lemma}
Let \( \mathcal{L} \) be a Bennett linkage, and let \( \mathcal{L}' \) denote its aligned linkage, i.e., the spatial line where all revolute axes lie on a single ruling of the associated regulus. Then the associated linkage of \( \mathcal{L}' \) is aligned but it is also folded, meaning the intersection points on $\LL'$ are not ordered in a monotone manner.
\end{lemma}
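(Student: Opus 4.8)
The plan is to contrast the two natural alignments of a Bennett loop: the one that makes it \emph{hypo-paradoxical} (monotone intersection order on $\LL'$) and the one that is forced on us by the geometry of the $J$-hyperboloid. First I would set up the Bennett linkage at a generic configuration $c$, invoke the fact established earlier in the paper that its four revolute axes $\xi_1,\dots,\xi_4$ lie on one ruling family of a hyperboloid of one sheet $H_c$, and pick a line $\LL'$ from the conjugate ruling. By construction $\LL'$ meets all four axes; the question is purely combinatorial — in what cyclic order do the four points $p_i=\xi_i\cap\LL'$ appear along $\LL'$? The key technical input is that the rulings within one family of a hyperboloid of one sheet are pairwise skew and come with a natural cyclic parametrization (say by the angle $\phi\in S^1$ that parametrizes the generating circle), and that a transversal from the conjugate family meets the ruling $\xi(\phi)$ at a point whose position along $\LL'$ is a \emph{monotone} function of $\phi$. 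So the order of the $p_i$ along $\LL'$ is exactly the cyclic order of the four parameters $\phi_1,\phi_2,\phi_3,\phi_4$ of the four axes on $H_c$.

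Hence the whole statement reduces to: in a Bennett linkage, consecutive revolute axes (consecutive in the kinematic chain, i.e. $\xi_1\to\xi_2\to\xi_3\to\xi_4\to\xi_1$) are \emph{not} consecutive in the cyclic $\phi$-order on $H_c$ — equivalently, the cyclic permutation relating ``chain order'' to ``ruling order'' is the transposition-type reordering $1,3,2,4$ (or its reverse), which is precisely the ``folded'' pattern: as one walks along $\LL'$ one passes $A_1', A_3', A_2', A_4'$, so the sub-chain doubles back on itself. To see this I would use the defining metric symmetry of the Bennett linkage: opposite links are equal ($a_{12}=a_{34}$, $a_{23}=a_{41}$) and opposite twist angles are equal (with the Bennett ratio $\sin\alpha/a = \sin\beta/b$). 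This forces the hyperboloid $H_c$ to have an extra symmetry — a $180^\circ$ rotation (a half-turn about the waist axis, the ``no coupling between $xz$ and $yz$'' statement the paper makes at $c_0$ generalizes to a rotational symmetry of order two at every $c$) — under which $\xi_1\leftrightarrow\xi_3$ and $\xi_2\leftrightarrow\xi_4$. A half-turn acts on the circle parametrizing one ruling family by $\phi\mapsto\phi+\pi$. Therefore $\phi_3=\phi_1+\pi$ and $\phi_4=\phi_2+\pi$, so on the circle the four parameters interleave as $\phi_1,\phi_2,\phi_1+\pi,\phi_2+\pi$ — i.e. in cyclic order they are $\phi_1,\phi_2,\phi_3,\phi_4$ arranged so that $1$ and $3$ separate $2$ and $4$. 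Reading this back as the order of intersection points along $\LL'$ gives exactly $A_1',A_3',A_2',A_4'$ (up to reversal and cyclic shift), which is non-monotone relative to the chain order $A_1'A_2'A_3'A_4'$: the aligned linkage is folded.

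I would then close by remarking on why this is consistent with Bennett mobility rather than contradicting it: by Proposition~\ref{thm:hypo}, monotone alignment forces rigidity, but here the alignment is \emph{folded}, so the locking argument of Section~3 does not apply — in triangle-inequality language, the link $A_1'A_2'$, the segment $A_1'A_4'$, and the partial sum along the folded middle portion need \emph{not} satisfy a rigid equality, leaving a one-parameter family of closures, which is the observed single degree of freedom. The main obstacle I anticipate is making the ``half-turn symmetry of $H_c$ at every configuration'' claim rigorous and coordinate-free: at the distinguished configuration $c_0$ the paper already notes the decoupling of $xz$ and $yz$, which is the reflection/rotation symmetry, but for a general $c$ one must argue that the Bennett metric conditions propagate this symmetry along the motion — most cleanly by exhibiting the order-two element of the isometry group of the loop (the map swapping the two ``opposite'' pairs of joints, which exists precisely because of the equal-opposite-links/equal-opposite-twists conditions) and noting it must preserve $H_c$ and act as a nontrivial projective involution on its ruling, hence as $\phi\mapsto\phi+\pi$. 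Everything else is the monotonicity-of-transversals fact about reguli, which the paper has already used in the proof of Theorem~3.2 and may be cited verbatim.
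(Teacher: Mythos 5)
Your approach is genuinely different from the paper's, but as written it contains a step that fails — and in fact proves the opposite of the lemma. From $\phi_3=\phi_1+\pi$ and $\phi_4=\phi_2+\pi$ you may take $\phi_1=0$, $\phi_2=\beta\in(0,\pi)$, so the four ruling parameters sit at $0<\beta<\pi<\pi+\beta$: the cyclic order on the parameter circle is $1,2,3,4$. (Saying ``$1$ and $3$ separate $2$ and $4$'' is exactly the statement that the cyclic order is $1,2,3,4$; it is the order $1,3,2,4$ in which they would \emph{fail} to separate.) Since, as you note, the transversal correspondence $\phi\mapsto \xi(\phi)\cap\LL'$ is a monotone (projective) bijection onto $\LL'\cup\{\infty\}$, the cyclic order is preserved, and cutting at the point at infinity yields a linear order along $\LL'$ that is a cyclic shift or reversal of $A_1'A_2'A_3'A_4'$ — never $A_1'A_3'A_2'A_4'$. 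So the jump to ``reading this back gives $A_1',A_3',A_2',A_4'$'' is a non sequitur, and the argument as stated concludes that the Bennett axes meet $\LL'$ in chain order, i.e.\ that the aligned linkage is \emph{un}folded; combined with Proposition~\ref{thm:hypo} (and the paper's remark that a single cyclic shift is also rigid) this would make the Bennett linkage rigid, which is absurd. The error is upstream, in the premise you yourself flag as the main obstacle: a half-turn symmetry of the loop need not act on the ruling circle as $\phi\mapsto\phi+\pi$. That is the action of a half-turn about the \emph{gorge} axis of the quadric; a half-turn about a transverse principal axis also preserves each ruling family but acts as a reflection of the $\phi$-circle (it fixes two rulings), giving $\phi_3=2\mu-\phi_1$, $\phi_4=2\mu-\phi_2$ and hence a genuinely folded order of type $1,2,4,3$. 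So the symmetry route can probably be repaired, but you must identify \emph{which} involution of the quadric the Bennett line-symmetry induces, and your current identification points the wrong way. You also gloss over where the point at infinity of $\LL'$ falls, which affects the translation from cyclic to linear order.

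For contrast, the paper avoids all of this with a continuity argument: it checks the folded order directly at the maximally convex configuration $c_0$, and then argues that along any path in the connected configuration space $\mathcal{C}(\mathcal{L})\cong\mathbb{S}^1$ the intersection order on $\LL'$ could only change if two intersection points collided, which would force two rulings of the same family to meet — impossible, since same-family rulings are pairwise skew. That argument needs only a single base-case verification rather than a configuration-independent symmetry computation, and it sidesteps the question of how the Bennett symmetry acts on the regulus entirely. If you want to keep your symmetry-based route, you would still be well advised to anchor it with the paper's $c_0$ check (or an explicit coordinate computation at one configuration) to determine the correct involution type, after which your monotone-transversal bookkeeping — done correctly — would indeed give the folded order at every configuration without invoking connectedness.
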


\begin{proof}
Consider Bennett linkage $\LL$ at its maximally convex configuration \( \LL(c_0) \). In this case it is easy to check that the geometric order induced by the ruling necessarily disagrees with the linkage connectivity. In other words,  the links of the aligned linkage \( \mathcal{L}'(c_0) \)  do not appear in a consistent sequential arrangement along the ruling direction.
\begin{figure}[htb]
\centering
\scalebox{.4}{\includegraphics{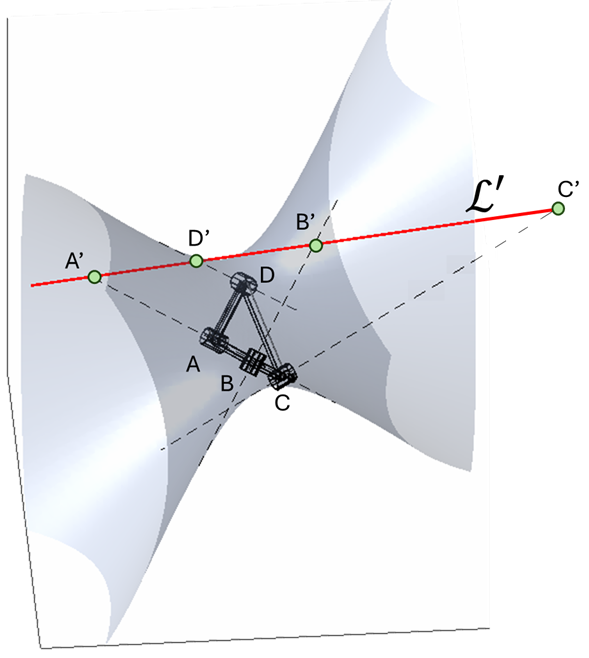}}
\caption{A Bennett mechanism in a general configuration. All joint axes $\xi_i$ lie on a common regulus, while the line $\LL'$ lies on the conjugate regulus. The intersections of the axes $\xi_i$ with the line $\LL'$ are  non-ordered (i.e., do not satisfy the conditions of proposition \ref{thm:hypo}).}
\label{f:bennet_non}
\end{figure}
Assume for contradiction that there exists a configuration \( c_2 \in \mathcal{C}(\mathcal{L}) \) such that \( \mathcal{L}'(c_2) \) is positively aligned, i.e., all links point in the same direction along the ruling and appear in the correct sequence. Since the configuration space \( \mathcal{C}(\mathcal{L}) \cong \mathbb{S}^1 \) is path-connected, so is the configuration space \( \mathcal{C}(\mathcal{L}') \). Therefore, there exists a continuous path of configurations from \( c_0 \) to \( c_2 \), and hence an intermediate configuration \( c_1 \) for which at least two axes of \( \mathcal{L}'(c_1) \) coincide. However, this contradicts the geometric constraint that the axes of \( \mathcal{L}' \) lie on a single ruling of the regulus, since lines in a single rulings do not intersect. Therefore, no such ordered configuration \( c_2 \) exists, and \( \mathcal{L}' \) must remain folded in all configurations.
\end{proof}

In light of this lemma, although the Bennett linkage is aligned along a ruling of a quadric surface, it is effectively folded onto itself and therefore does not satisfy the conditions of Proposition~\ref{thm:hypo}. Recalling that any three screw lines lying on a regulus define a \emph{screw system}~\cite{dandurand1984rigidit6,huang2012theory}, a fourth screw line located on the same regulus is linearly dependent on the former three. In other words, the joint screws of the Bennett linkage remain dependent throughout its entire kinematic cycle, thereby ensuring its mobility.

\section*{Conclusion}
This paper introduced the notion of \emph{hypo-paradoxical linkages}: mechanisms that defy the classical mobility count (Chebyshev–Gr\"ubler–Kutzbach) by being kinematically rigid even though the count predicts $M>0$. We showed that such rigidity can arise from subtle geometric alignment. In particular, when all joint screws intersect a common line in a monotone (order‑preserving) fashion, collapsing the available motion. Using analytic and geometric tools, we derived general conditions for this behavior and illustrated them on the family of regular $n$-gon linkages. 

We further applied the same geometric viewpoint to the classical Bennett linkage, explaining why, despite its paradoxical reputation. 

Future work will investigate \emph{hyper-paradoxical mechanisms} that incorporate prismatic joints, exploring how mixed revolute–prismatic alignments can either induce or relieve mobility deficits.

 \clearpage

\end{document}